\DeclareMathOperator*{\argmin}{argmin}
\newtheorem{lemma}{Lemma}
\newtheorem{proposition}{Proposition}
\newtheorem{theorem}{Theorem}
\newtheorem{assumption}{Assumption}
	\title{\LARGE \bf Learning-based Homothetic Tube MPC}
	\author{Yulong Gao, Shuhao Yan, Jian Zhou, and Mark Cannon
        \thanks{This work was funded in part by Deutsche Forschungsgemeinschaft (DFG, German Research Foundation) under Germany's Excellence Strategy - EXC 2075 – 390740016; and the Strategic Research Area at Link\"oping-Lund in Information Technology (ELLIIT). S. Yan acknowledges the support by the Stuttgart Center for Simulation Science (SimTech).}%
		\thanks{Y. Gao is with the Department of Electrical and Electronic Engineering, Imperial College London, UK {\tt\small yulong.gao@imperial.ac.uk}}%
		\thanks{S. Yan is with the Department of Mathematics, University of Stuttgart, Germany {\tt\small shuhao.yan@mathematik.uni-stuttgart.de}}%
		\thanks{J. Zhou is with the  Department of Electrical Engineering, Link\"oping University, Sweden {\tt\small jian.zhou@liu.se}}%
   \thanks{M. Cannon is with the  Department of Engineering Science, University of Oxford, UK {\tt\small mark.cannon@eng.ox.ac.uk}}%
   }
\begin{document}	
\maketitle
\thispagestyle{empty} 
\pagestyle{empty}

 \begin{abstract}
In this paper, we study homothetic tube  model predictive control (MPC) of discrete-time linear systems subject to bounded additive disturbance and mixed constraints on the state and input.
 Different from most existing work on robust MPC, we assume that the true disturbance set is unknown but a conservative surrogate is available a priori.  Leveraging the real-time data, we develop 
 an online learning algorithm to approximate the true disturbance set. This approximation and the corresponding constraints in the MPC optimisation are updated online using computationally convenient linear programs.
 We provide statistical gaps between the true and learned disturbance sets, based on which, probabilistic recursive feasibility of homothetic tube MPC problems is discussed. Numerical simulations are provided to demonstrate the efficacy of our proposed algorithm and compare with state-of-the-art MPC algorithms.
 \end{abstract}

 \begin{keywords}
Tube MPC, constraint satisfaction, data-driven control, learning uncertainty
\end{keywords}
	
\section{Introduction}

Robust model predictive control (MPC) ensures constraint satisfaction under all possible realisations of uncertainty. It has received considerable interest, especially tube-based methods \cite{langson2004robust,mayne2005robust,rakovic2012homothetic}, due to their reduced online computational requirements and constraint handling. 
However, they suffer from potential conservativeness. Tubes bounding uncertainty evolution may not be exact, and the disturbance bounds used in constructing tubes can be overly conservative, leading to small feasible sets. In addition, the offline tube construction renders it difficult to exploit online information to further quantify uncertainty and update the tube over time.

\begin{figure}
	\centering
	{\includegraphics[width =0.6\columnwidth]{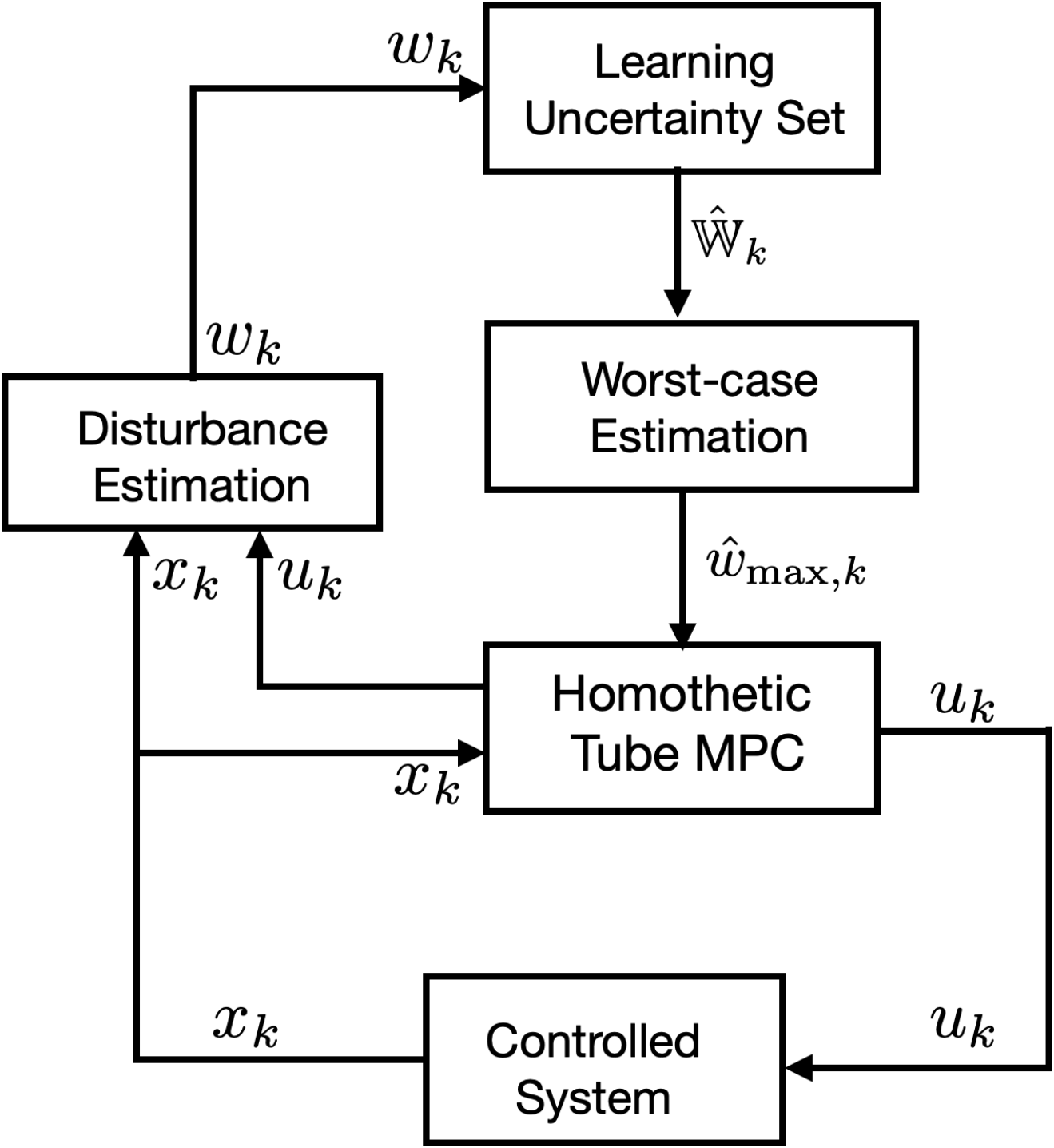}}
	\caption{Learning-based tube MPC framework, where $x_k$, $u_k$, $w_k$ are state, control input, and disturbance respectively; $\hat{\mathbb{W}}_k$ is the learned disturbance set; and $\hat{w}_{\max,k}$ is the worst-case estimation. }
	\label{Fig:learningtubeMPC}
\end{figure}

To mitigate the conservativeness discussed above, we propose a learning-based MPC algorithm incorporating online uncertainty quantification for discrete-time linear systems subject to bounded additive disturbance, as shown in Fig.~\ref{Fig:learningtubeMPC}. Since the exact set, $\mathbb{W}_{\rm true}$, containing all possible realisations of the disturbance is often unknown, we instead assume an over-approximate polytopic set, $\mathbb{W}$, of $\mathbb{W}_{\rm true}$ is given. Assuming perfect state feedback, we can collect disturbance measurements and use this data to design parameterised sets $\hat{\mathbb{W}}_k$ for approximating $\mathbb{W}_{\rm true}$ online via the scenario approach \cite{campi2008exact}. These parameterised sets are referred to as learned disturbance sets. 
The main contributions of this paper are summarised as follows. 1) The set $\hat{\mathbb{W}}_k$ is parameterised such that it scales $\mathbb{W}$ in different directions with heterogeneous parameters. This can yield a more accurate estimate of $\mathbb{W}_{\rm true}$ than the set parameterisation in \cite{gao2023learning}, which only allows for uniform scaling. Also, the parameterisation of $\hat{\mathbb{W}}_k$ enables the approximation problem to be efficiently solved as a linear program (LP).
2) The learned disturbance sets are updated online and incorporated in the synthesis of a computationally efficient MPC algorithm. In the MPC problem, constraints on predicted trajectories over an infinite horizon are enforced by a finite number of constraints. This number is fixed by leveraging the nestedness property of the time-varying learned disturbance sets. By contrast, in the previous work \cite{gao2023learning}, the required number of constraints needs to be recomputed at each time step in the learning-based rigid tube MPC problem, increasing computational burdens.

\textit{Related work}: Tube-based MPC provides a computationally tractable framework for control design, in which a main focus is tube parameterisations. Typical parameterisations include the rigid tube \cite{mayne2005robust} and homothetic tube \cite{rakovic2012homothetic}. Further developments revolve around the trade-off between computational complexity and control performance. In \cite{rakovic2012parameterized}, a tube parameterisation is proposed with cross-sections consisting of partial cross-sections expressed as convex hulls of finite numbers of points. The elastic tube \cite{7525471} generalises the homothetic tube, and its cross-section is determined by scaling a base shape set heterogeneously in all directions. 

Development of data-driven techniques has allowed improved methods for dealing with uncertainty in MPC design, for example, in terms of constraint handling and model description. In \cite{9206383}, the scenario approach is used to construct an inner convex approximation of a chance-constrained set, ensuring computational tractability of the resulting MPC algorithm. Similarly, support vector clustering is used to identify a high-density region of disturbance in stochastic MPC problems \cite{8727721,shang2019data} and results in robust optimisation problems. In \cite{8619572}, Gaussian process (GP) regression is applied to learn the unknown part of system dynamics with trajectory data, and a statistical GP model is used in the subsequent MPC design. In \cite{oestreich2023tube}, uncertainty quantification is combined with rigid tube MPC synthesis. The shape matrix of the ellipsoidal tube cross-section is chosen to be the covariance matrix of disturbance that is estimated from disturbance measurements and updated over time, aiming to compensate for the possibly imperfect knowledge of disturbance sets.

\textit{Notation}: We use $\mathbb{N}$ to denote the set of nonnegative integers and $\mathbb{R}_{+}$ to denote the set of nonnegative real numbers.  For some $q,s \in \mathbb{N}$ and $q<s$, let $\mathbb{N}_{[q,s]}:=\{r \in \mathbb{N}\mid q \leq r\leq s\}$. For two sets $\mathbb{X}$ and $\mathbb{Y}$, $\mathbb{X}\oplus \mathbb{Y}:=\{x+y\mid x\in\mathbb{X}, y\in \mathbb{Y}\}$. When $\leq$, $\geq$, $<$, and $>$ are applied to vectors, they are interpreted element-wise. Matrices of appropriate dimension with all elements equal to 1 and 0 are denoted by $\bm{1}$ and $\bm{0}$, respectively. 
The cardinality of a set $\mathbb{X}$ is denoted by $|\mathbb{X}|$ and ${\rm vol}(\mathbb{X})$ denotes its volume.

\section{Problem Description} \label{sec:problem description}

Consider a discrete-time linear system with additive disturbance in the form of
	\begin{eqnarray} \label{LTIsystem}
		x_{k+1} = Ax_{k}+Bu_{k}+w_k, 
	\end{eqnarray}
	where $x_k \in \mathbb{R}^{n_{x}}$ is the state,  $u_k\in \mathbb{R}^{n_{u}}$ the control input, and $w_k\in\mathbb{W}_{\rm true}(\subseteq\mathbb{W})$ the disturbance. The pair $(A,B)$ is assumed given and stabilisable, but the \emph{true disturbance set} $\mathbb{W}_{\rm true}$ is unknown. The pair $(x_k,u_k)$ is subject to
\begin{eqnarray}
		Fx_k+Gu_k\leq \bm{1}, \nonumber
\end{eqnarray}
	where $F\in \mathbb{R}^{n_c \times n_x}$ and $G\in  \mathbb{R}^{n_c \times n_u}$.

\begin{assumption}\label{Ass}
The set $\mathbb{W}$ is a known compact polytope given by $\{w\in \mathbb{R}^{n_x}\mid  V_w w\leq \bm{1}\}$, where $V_w\in \mathbb{R}^{n_v\times n_x}$.
\end{assumption}
	
\subsection{Homothetic Tube MPC based on Disturbance Set $\mathbb{W}$} \label{sec: preliminary}
	This section presents preliminaries of homothetic tube MPC. We begin with decomposing predicted trajectories as
	\begin{subequations}
		\begin{align} 
			x_{i|k} &= s_{i|k}+e_{i|k},  \
			u_{i|k} = Kx_{i|k}+c_{i|k},  \nonumber \\
			s_{i+1|k} &= \Phi s_{i|k}+Bc_{i|k}, \
			e_{i+1|k} = \Phi e_{i|k}+w_{i|k}, \nonumber
		\end{align} 
	\end{subequations}
	where $w_{i|k}\in \mathbb{W}$.
	Here $s_{i|k}$ and $e_{i|k}$ typically denote nominal and uncertain components of the predicted state, respectively, 
	and $K$ is fixed such that $\Phi:=A+BK$ is strictly stable. The free variable is $\bm{c}_k:=(c_{0|k}, \ldots,  c_{N-1|k})$, and we set $c_{i|k}=\bm{0}$ for $i\in \mathbb{N}_{\geq N}$, where $N$ is a prediction horizon.

Based on the decomposition, one can construct a homothetic tube $\{\mathbb{S}_{i|k}:=\alpha_{i|k}\mathbb{S}, i\in \mathbb{N}\}$ with \emph{nonnegative} scaling  parameters $\alpha_{i|k}$ to bound uncertainty evolution. The base set $\mathbb{S}$ is a convex and compact polytope given by 
$
	\mathbb{S}=\{e\in \mathbb{R}^{n_x}\mid  V_{s} e\leq \bm{1}\},
$
where $V_{s} \in \mathbb{R}^{n_s \times n_x}$.
\begin{assumption}\label{ass:robust invariance}
    The set $\mathbb{S}$ satisfies that $\Phi \mathbb{S} \oplus \mathbb{W} \subseteq \mathbb{S}$.
\end{assumption}

To ensure $e_{i|k}\in\mathbb{S}_{i|k}$ for all $i\in\mathbb{N}$, we impose that
\[
e_{0|k}\in\mathbb{S}_{0|k},\quad \Phi \alpha_{i|k}\mathbb{S} \oplus \mathbb{W}\subseteq \alpha_{i+1|k} \mathbb{S},~~\forall i \in \mathbb{N},
\]
where the second set of conditions is equivalent to
\begin{eqnarray} \label{Eq:alpha inequality}
	\alpha_{i|k} e_{\rm max} +w_{\rm max}\leq   \alpha_{i+1|k}\bm{1}, \ \forall i\in\mathbb{N},
\end{eqnarray}
with
$e_{\rm max}:=\max_{e\in \mathbb{S}}  V_{s} \Phi e$ and $w_{\rm max}:=\max_{w\in \mathbb{W}}  V_s w$. To limit the number of variables, one can set $\alpha_{i|k}=1$ for all $i\in\mathbb{N}_{\geq N}$, which satisfies condition \eqref{Eq:alpha inequality} under Assumption \ref{ass:robust invariance}.

Then, with this tube, constraints $Fx_{i|k}+Gu_{i|k}\leq \bm{1}$ for $i\in\mathbb{N}$ can be reformulated as 
\begin{equation}\label{eq:infinite deterministic constraints}
    \bar{F}\Psi^i \begin{bmatrix}
				s_{0|k} \\
				\bm{c}_k
			\end{bmatrix} \leq \bm{1}-\alpha_{i|k}h, \ i=0,1,\ldots, 
\end{equation}
where $h:=\max\limits_{e\in \mathbb{S}} (F+GK)e$, 
$\bar{F}:=\begin{bmatrix}
F+GK & GE
\end{bmatrix}$, $\Psi:=\begin{bmatrix}
\Phi & BE \\ \
\bm{0} & M
\end{bmatrix}$,  $E:=\begin{bmatrix}
I_{n_u} & \bm{0} & \cdots & \bm{0}
\end{bmatrix}$, and $M$ is the block-upshift operator. However, condition \eqref{eq:infinite deterministic constraints} involves infinitely many constraints. Lemma \ref{lemma:condition on tau} addresses this issue by providing an equivalent set defined by a finite number of constraints. 
\begin{lemma}[\!\!{\cite[\S3.7, ex.~9]{kouvaritakis2016model}}] \label{lemma:condition on tau}
    If $n \geq N-1$ is an integer such that $\bar{F}\Psi^{n+1} z \leq \bm{1} -h$ holds for all $(\alpha_0,\ldots,\alpha_{N-1})\in\mathbb{R}_+^N$ and $z$ satisfying%
    \begin{subequations} \label{eq:finite number of constraints}
        \begin{align} 
        &\alpha_i e_{\rm max} + w_{\rm max} \leq \alpha_{i+1} \bm{1}, \ i\in\mathbb{N}_{[0,N-1]},  \\
        &\bar{F}\Psi^i z \leq \bm{1}- \alpha_i h, i\in\mathbb{N}_{[0,n]}, ~~ \alpha_i=1,\ i\in\mathbb{N}_{\geq N},
        \end{align}
    \end{subequations}
    then the feasible set for $z$ and $(\alpha_0,\ldots,\alpha_{N-1})$ in (\ref{eq:finite number of constraints}a,b) is equal to the feasible set for \eqref{Eq:alpha inequality} and \eqref{eq:infinite deterministic constraints} with $[s_{0|k}^T \ \bm{c}_k^T] = z^T$ and $\alpha_{i|k}=\alpha_i$ for all $i\in \mathbb{N}$.
\end{lemma}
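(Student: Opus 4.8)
The plan is to show that the two feasible sets coincide. Since the constraints in (\ref{eq:finite number of constraints}a,b) are a subset of those in \eqref{Eq:alpha inequality} and \eqref{eq:infinite deterministic constraints}, the feasible set of the latter is automatically contained in that of the former, so every $(z,\alpha)$ feasible for the infinite system is feasible for the finite one. The substance is the reverse inclusion: I must show that the redundancy hypothesis forces all the omitted constraints, namely those indexed by $i>n$ in \eqref{eq:infinite deterministic constraints} together with the scaling recursion \eqref{Eq:alpha inequality} for $i\geq N$, to hold automatically for any $(z,\alpha)$ satisfying (\ref{eq:finite number of constraints}a,b).

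First I would dispose of the scaling recursion. For $i\geq N$ one has $\alpha_i=\alpha_{i+1}=1$, so \eqref{Eq:alpha inequality} reduces to $e_{\rm max}+w_{\rm max}\leq\bm 1$, which is exactly the vector form of the robust invariance condition $\Phi\mathbb{S}\oplus\mathbb{W}\subseteq\mathbb{S}$ in Assumption \ref{ass:robust invariance}. Hence only the indices $i\in\mathbb{N}_{[0,N-1]}$ of \eqref{Eq:alpha inequality} are active, and these are precisely (\ref{eq:finite number of constraints}a); the recursion parts of the two systems are therefore equivalent, and it remains only to handle the trajectory constraints \eqref{eq:infinite deterministic constraints}.

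The core is an induction establishing $\bar{F}\Psi^i z\leq\bm 1-\alpha_i h$ for every $i>n$ whenever $(z,\alpha)$ satisfies (\ref{eq:finite number of constraints}a,b). Because $n\geq N-1$, any such $i$ satisfies $i\geq N$, so $\alpha_i=1$ and the claim becomes $\bar{F}\Psi^i z\leq\bm 1-h$; the base case $i=n+1$ is exactly the redundancy hypothesis. For the inductive step I would exploit shift invariance: given a feasible pair $(z,\alpha)$, set $z':=\Psi z$ and $\alpha'_i:=\alpha_{i+1}$ (so $\alpha'_i=1$ for $i\geq N-1$). One verifies that $(z',\alpha')$ again satisfies (\ref{eq:finite number of constraints}a,b): constraint (\ref{eq:finite number of constraints}a) for $\alpha'$ is the recursion for $\alpha$ re-indexed, which is part of (\ref{eq:finite number of constraints}a) except at its last index, where it once more reduces to $e_{\rm max}+w_{\rm max}\leq\bm 1$ by Assumption \ref{ass:robust invariance}; and (\ref{eq:finite number of constraints}b) for $z'$ reads $\bar{F}\Psi^j z\leq\bm 1-\alpha_j h$ for $j\in\mathbb{N}_{[1,n+1]}$, of which the cases $j\leq n$ are inherited from (\ref{eq:finite number of constraints}b) and the single case $j=n+1$ (where $\alpha_{n+1}=1$) is supplied by the redundancy hypothesis. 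Applying the redundancy hypothesis to the feasible pair $(z',\alpha')$ then yields $\bar{F}\Psi^{n+1}z'=\bar{F}\Psi^{n+2}z\leq\bm 1-h$, advancing the induction.

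The step I expect to be delicate is exactly this verification that the shifted pair $(z',\alpha')$ stays inside the finite feasible set: it hinges on the redundancy hypothesis doubling as the certificate for the lone ``new'' index $j=n+1$ created by the shift, and on the boundary bookkeeping at $i=N$, where the scaling freezes to $1$ and Assumption \ref{ass:robust invariance} closes the recursion (the role of requiring $n\geq N-1$ is precisely to place $n+1$ into this frozen regime). Once the induction is in place, combining it with the equivalence of the recursion parts shows that feasibility for (\ref{eq:finite number of constraints}a,b) implies feasibility for \eqref{Eq:alpha inequality} and \eqref{eq:infinite deterministic constraints}, which together with the trivial inclusion gives the claimed equality of the two feasible sets.
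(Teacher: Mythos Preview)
The paper does not supply its own proof of this lemma; it is quoted from \cite[\S3.7, ex.~9]{kouvaritakis2016model} without argument. Your proof is correct and is in fact the standard determination-index argument one finds in that reference: the shift $(z,\alpha)\mapsto(\Psi z,\alpha_{\cdot+1})$ preserves membership in the finite feasible set because the single new constraint it creates at index $n+1$ is supplied precisely by the redundancy hypothesis, and iterating the shift then yields $\bar F\Psi^{n+1+k}z\le\bm 1-h$ for all $k\ge 0$. Your bookkeeping at the boundary $i=N$, where Assumption~\ref{ass:robust invariance} closes the recursion, and your observation that $n\ge N-1$ forces $\alpha_{n+1}=1$, are exactly the points that make the argument go through.
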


The cost function can be chosen as $$J(s_{0|k}, \bm{c}_k,\bm{\alpha}_k):=	\|s_{0|k}\|^2_{P_x}+\|\bm{c}_k\|^2_{P_c} + q_{\alpha}\sum_{i=0}^{N-1}(\alpha_{i|k}-1)^2,$$ where $\bm{\alpha}_k:=(\alpha_{0|k},\ldots,\alpha_{N-1|k})$, and positive semidefinite weighting matrices $P_x$, $P_c$ and positive weight parameter $q_{\alpha}$ are appropriately selected. 
To summarise, the MPC optimisation to be solved at time step $k$ is given by
\begin{eqnarray}\label{Eq:HomTubeMPCOPT}
 \begin{aligned}
     &\min\limits_{s_{0|k},\bm{c}_k,\bm{\alpha}_k\in\mathbb{R}^N_{+}} J(s_{0|k}, \bm{c}_k,\bm{\alpha}_k)    \\
		&{~\rm s.t.}
		\begin{cases}
			x_k-s_{0|k}\in \alpha_{0|k}\mathbb{S}, \\
			\alpha_{i|k} e_{\rm max} +w_{\rm max} \leq   \alpha_{i+1|k}\bm{1}, \forall i\in\mathbb{N}_{[0,N-1]},  \\
			\alpha_{i|k}=1,  \forall i\in\mathbb{N}_{\geq N}, \\
                \bar{F}\Psi^i \!\begin{bmatrix}
				s_{0|k} \\
				\bm{c}_k
			\end{bmatrix}\!\!\leq\! \bm{1}-\alpha_{i|k}h, 	 \forall i\in\mathbb{N}_{[0,\tau]}.
		\end{cases}
 \end{aligned}
\end{eqnarray}
We use ${\rm OPT}(\mathbb{W}, w_{\rm max},\tau)$ to denote these problems, where $\tau$ is chosen as the smallest integer such that the conditions of Lemma \ref{lemma:condition on tau} are satisfied with $n=\tau$. Problem \eqref{Eq:HomTubeMPCOPT} includes the optimisation for rigid tube MPC as a special case where cross-sections of the tube are fixed and $\alpha_{i|k}=1$ $\forall i\in\mathbb{N}$.
\begin{lemma}[\!\!{\cite[\S3.7, ex.~9]{kouvaritakis2016model}}]\label{lemma:conventional recursive feasibility}
    If the integer $\tau$ in \eqref{Eq:HomTubeMPCOPT} is chosen so that $n=\tau$ satisfies the conditions in Lemma~\ref{lemma:condition on tau}, then problem \eqref{Eq:HomTubeMPCOPT} is recursively feasible given initial feasibility.
\end{lemma}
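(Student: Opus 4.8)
The plan is to use the standard receding-horizon shift argument: from a feasible point at time $k$ I construct a candidate at time $k+1$ by shifting the decision variables, and then I verify each constraint of \eqref{Eq:HomTubeMPCOPT} in turn. The essential preliminary step is to invoke Lemma~\ref{lemma:condition on tau}. Since $n=\tau$ satisfies its hypotheses, feasibility of the finite constraint set actually enforced in \eqref{Eq:HomTubeMPCOPT} is equivalent to satisfaction of the infinite families \eqref{Eq:alpha inequality} and \eqref{eq:infinite deterministic constraints}. Hence, writing $z_k:=[\,s_{0|k}^{T}\ \bm c_k^{T}\,]^{T}$, from a time-$k$ feasible solution $(s_{0|k},\bm c_k,\bm\alpha_k)$ I may assume that $\bar F\Psi^{i}z_k\le \bm 1-\alpha_{i|k}h$ holds for \emph{all} $i\in\mathbb N$, not merely for $i\in\mathbb N_{[0,\tau]}$.

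Next I would define the candidate. With the applied input $u_k=Kx_k+c_{0|k}$, the successor state is $x_{k+1}=\Phi x_k+Bc_{0|k}+w_k$. I set $s_{0|k+1}:=s_{1|k}=\Phi s_{0|k}+Bc_{0|k}$, $\bm c_{k+1}:=(c_{1|k},\ldots,c_{N-1|k},\bm 0)$, and $\alpha_{i|k+1}:=\alpha_{i+1|k}$ for all $i$. By the definitions of $E$ and the upshift operator $M$, this choice is exactly $z_{k+1}=\Psi z_k$, so that $\Psi^{i}z_{k+1}=\Psi^{i+1}z_k$; this algebraic identity is what ties the shifted candidate to the time-$k$ constraints.

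I would then check the remaining constraints. For the tube initial condition, $x_{k+1}-s_{0|k+1}=\Phi(x_k-s_{0|k})+w_k\in\alpha_{0|k}\Phi\mathbb S\oplus\mathbb W$, and the $i=0$ instance of \eqref{Eq:alpha inequality} is precisely the statement $\Phi\alpha_{0|k}\mathbb S\oplus\mathbb W\subseteq\alpha_{1|k}\mathbb S=\alpha_{0|k+1}\mathbb S$, which closes this constraint. For the scaling recursion, substituting the shift reduces the required inequalities to $\alpha_{j|k}e_{\rm max}+w_{\rm max}\le\alpha_{j+1|k}\bm 1$ for $j\in\mathbb N_{[1,N]}$: the instances $j\le N-1$ hold by time-$k$ feasibility, while the boundary case $j=N$ reads $e_{\rm max}+w_{\rm max}\le\bm 1$ (using $\alpha_{N|k}=\alpha_{N+1|k}=1$), which is exactly Assumption~\ref{ass:robust invariance}. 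The terminal requirement $\alpha_{i|k+1}=1$ for $i\ge N$ is immediate from the shift. Finally, the state/input constraints follow from $\bar F\Psi^{i}z_{k+1}=\bar F\Psi^{i+1}z_k\le\bm 1-\alpha_{i+1|k}h=\bm 1-\alpha_{i|k+1}h$ for every $i$, in particular for $i\in\mathbb N_{[0,\tau]}$.

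I expect the main subtlety to lie precisely in this last verification: the shift maps the index-$\tau$ constraint at time $k+1$ onto the index-$(\tau+1)$ constraint at time $k$, which is \emph{not} among the finitely many constraints that are actually imposed in \eqref{Eq:HomTubeMPCOPT}. It is the equivalence supplied by Lemma~\ref{lemma:condition on tau} that legitimises appealing to the infinite-horizon satisfaction \eqref{eq:infinite deterministic constraints}, and it is Assumption~\ref{ass:robust invariance} that closes the scaling recursion at the terminal index. Assembling the four verifications then shows the shifted triple is feasible for \eqref{Eq:HomTubeMPCOPT} at time $k+1$, which establishes recursive feasibility.
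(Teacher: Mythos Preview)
The paper does not actually supply its own proof of this lemma; it is stated with a citation to \cite[\S3.7, ex.~9]{kouvaritakis2016model} and left unproved. Your argument is correct and is precisely the standard shift-based recursive feasibility proof one expects for homothetic tube MPC: invoke Lemma~\ref{lemma:condition on tau} so that feasibility at time $k$ gives the infinite family \eqref{eq:infinite deterministic constraints}, shift $(s_{0|k},\bm c_k,\bm\alpha_k)$ via $z_{k+1}=\Psi z_k$ and $\alpha_{i|k+1}=\alpha_{i+1|k}$, use the $i=0$ instance of \eqref{Eq:alpha inequality} to close the tube initial condition, and use Assumption~\ref{ass:robust invariance} to handle the terminal index of the scaling recursion. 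The only tiny omission is that you do not explicitly note $\bm\alpha_{k+1}\in\mathbb R_+^N$, but this is immediate from the shift since $\alpha_{i|k}\ge 0$ and $\alpha_{N|k}=1$.
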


\subsection{Motivation and Objective}
Due to reliance on a conservative disturbance set $\mathbb{W}$, the homothetic tube MPC \eqref{Eq:HomTubeMPCOPT} may have a small feasible region or even suffer from infeasibility. To address this, we propose to integrate learning of the true disturbance set $\mathbb{W}_{\rm true}$ with the design of homothetic tube MPC, as illustrated in Fig.~\ref{Fig:learningtubeMPC}. Learned disturbance sets are denoted by $\hat{\mathbb{W}}_k$.  

Our aim is two-fold: (1)  characterisation of the gap between $\mathbb{W}_{\rm true}$ and $\hat{\mathbb{W}}_k$; (2) theoretical analysis of the homothetic tube MPC algorithm based on $\hat{\mathbb{W}}_k$.

\section{Learning Uncertainty Set}\label{sec: learning uncertainty set}
In this section, we introduce a parameterised set for approximating $\mathbb{W}_{\rm true}$ using observed disturbance realisations and $\mathbb{W}$ as a base set. Based on scenario optimisation, we provide a probabilistic guarantee of approximation accuracy. 
\begin{assumption}\label{assumption: iid samples}
    The disturbance inputs $\{w_0,w_1,\ldots\}$ affecting the system~\eqref{LTIsystem} are i.i.d. according to an unknown probability distribution ${\rm Pr}$ with support $\mathbb{W}_{\rm true}$. 
\end{assumption}

We first define an information set $\mathcal{I}_k$ as a collection of observed disturbance realisations up to the $k$-th time step, so that $\mathcal{I}_k=\mathcal{I}_{k-1}\cup\{w_{k-1}\}$, where $w_{k-1}$ can be calculated given perfect state measurements. The information set $\mathcal{I}_0$ contains disturbance samples that are assumed available offline. The parameterised set is given by
\begin{equation}\label{Eq:W_parameterised homothetic}
	\mathcal{W}(v,\bm{\theta},\rho ):=	f(\mathbb{W},\bm{\theta}) \oplus (1-\rho)v,
\end{equation} 
where  $v\in \mathbb{W}$, $\bm{\theta}\in [0,1]^{n_v}$, and $\rho\in[0,1]$ are design parameters, and
$ f(\mathbb{W},\bm{\theta}):=\{w\in \mathbb{R}^{n_x}\mid  V_w w\leq \bm{\theta}\}$. The function $f(\mathbb{W},\bm{\theta})$ defines an operation to scale $\mathbb{W}$ in different directions with heterogeneous parameters. If $\bm{\theta}=\rho\bm{1}$, this becomes an uniform scaling and \eqref{Eq:W_parameterised homothetic} reduces to the parameterisation proposed in \cite{gao2023learning} for approximating the disturbance set. The set $\mathcal{W}$ enjoys the following properties.
\begin{lemma}\label{lemma:properties of parameterisation}
	For any $v\in \mathbb{W}$, $\bm{\theta}\in [0,1]^{n_v}$, and $\rho\in[0,1]$, if $\bm{\theta} \leq \rho\bm{1}$, then it holds that: (1) $\mathcal{W}(v,\bm{ \theta},\rho)\subseteq \mathbb{W}$, and (2) $( \theta_{\min})^{n_x} {\rm vol}(\mathbb{W}) \leq {\rm vol}(	\mathcal{W}(v,\bm{ \theta},\rho))\leq  ( \theta_{\max})^{n_x} {\rm vol}(\mathbb{W})$, where $ \theta_{\min}:=\min_{i}[\bm{ \theta}]_{i}$ and $ \theta_{\max}:=\max_{i}[\bm{ \theta}]_{i}$. 
\end{lemma}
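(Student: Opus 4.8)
The plan is to treat the two claims separately, since they rely on different structural facts about the parameterisation $\mathcal{W}(v,\bm{\theta},\rho) = f(\mathbb{W},\bm{\theta})\oplus(1-\rho)v$.

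For claim (1), I would argue by direct membership. A generic point of $\mathcal{W}(v,\bm{\theta},\rho)$ has the form $w+(1-\rho)v$ with $V_w w\le\bm{\theta}$ and $V_w v\le\bm{1}$ (the latter because $v\in\mathbb{W}$). Applying $V_w$ and using linearity gives $V_w(w+(1-\rho)v)=V_w w+(1-\rho)V_w v\le \bm{\theta}+(1-\rho)\bm{1}$. The hypothesis $\bm{\theta}\le\rho\bm{1}$ then yields $\bm{\theta}+(1-\rho)\bm{1}\le\rho\bm{1}+(1-\rho)\bm{1}=\bm{1}$, so the point lies in $\mathbb{W}$. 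I would note that this is the only place where the assumption $\bm{\theta}\le\rho\bm{1}$ is needed; the volume bounds in (2) hold for any $\bm{\theta}\in[0,1]^{n_v}$.

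For claim (2), I would first observe that $\oplus(1-\rho)v$ is a translation by a single vector, hence volume-preserving, so $\mathrm{vol}(\mathcal{W}(v,\bm{\theta},\rho))=\mathrm{vol}(f(\mathbb{W},\bm{\theta}))$ and it suffices to bound the latter. The key fact is that $f(\mathbb{W},\cdot)$ is inclusion-monotone in its second argument: if $\bm{\theta}_1\le\bm{\theta}_2$ then $f(\mathbb{W},\bm{\theta}_1)\subseteq f(\mathbb{W},\bm{\theta}_2)$, which is immediate from the defining inequalities $V_w w\le\bm{\theta}_j$. Sandwiching $\theta_{\min}\bm{1}\le\bm{\theta}\le\theta_{\max}\bm{1}$ then gives $f(\mathbb{W},\theta_{\min}\bm{1})\subseteq f(\mathbb{W},\bm{\theta})\subseteq f(\mathbb{W},\theta_{\max}\bm{1})$. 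I would next identify the two extremal sets with uniform scalings of $\mathbb{W}$: for $\lambda>0$, the equivalence $w/\lambda\in\mathbb{W}\iff V_w w\le\lambda\bm{1}$ shows $f(\mathbb{W},\lambda\bm{1})=\lambda\mathbb{W}$, whose volume is $\lambda^{n_x}\mathrm{vol}(\mathbb{W})$. Monotonicity of volume under set inclusion then delivers $(\theta_{\min})^{n_x}\mathrm{vol}(\mathbb{W})\le\mathrm{vol}(f(\mathbb{W},\bm{\theta}))\le(\theta_{\max})^{n_x}\mathrm{vol}(\mathbb{W})$, completing the proof.

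This argument is essentially mechanical, so I do not anticipate a genuine obstacle; the only points requiring care are technical. The scaling identity $f(\mathbb{W},\lambda\bm{1})=\lambda\mathbb{W}$ uses $\lambda>0$ through the substitution $w\mapsto w/\lambda$, so the degenerate case $\theta_{\min}=0$ must be handled separately, which is harmless since the lower bound then reads $0\le\mathrm{vol}(f(\mathbb{W},\bm{\theta}))$ and holds trivially (indeed $f(\mathbb{W},\bm{0})$ is a cone contained in the bounded set $\mathbb{W}$, hence equals $\{\bm{0}\}$ and has zero volume). I would also remark at the outset that $f(\mathbb{W},\bm{\theta})$ is a nonempty bounded polytope: it contains the origin because $\bm{\theta}\ge\bm{0}$, and it is contained in $\mathbb{W}$ because $\bm{\theta}\le\bm{1}$, so every volume invoked is finite and well defined.
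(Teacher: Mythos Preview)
Your proposal is correct and follows essentially the same route as the paper's own proof: for (1) you bound $V_w$ applied to a generic point using $v\in\mathbb{W}$, $1-\rho\ge 0$, and $\bm{\theta}\le\rho\bm{1}$, and for (2) you use translation invariance of volume together with the sandwich $f(\mathbb{W},\theta_{\min}\bm{1})\subseteq f(\mathbb{W},\bm{\theta})\subseteq f(\mathbb{W},\theta_{\max}\bm{1})$ and the identification $f(\mathbb{W},\lambda\bm{1})=\lambda\mathbb{W}$. Your treatment is in fact slightly more careful than the paper's, since you explicitly address the degenerate case $\theta_{\min}=0$ and the well-definedness of the volumes.
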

\begin{proof}
(1) By its definition, $
\mathcal{W}(v,\bm{\theta},\rho)=\{w\in\mathbb{R}^{n_x}\, |\, V_w w \leq \bm{\theta} + (1-\rho)V_w v\}
$. Since $v \in \mathbb{W}$ and $1-\rho\geq0$, it holds that $(1-\rho)V_w v \leq (1-\rho)\bm{1}$. Under the condition $\bm{\theta}\leq \rho\bm{1}$, it follows that $\bm{\theta} + (1-\rho)V_w v \leq \bm{1}$. Thus, any $w\in \mathcal{W}(v,\bm{\theta},\rho)$ also lies in $\mathbb{W}$, which proves the set containment; (2) As the volume of a set remains unchanged after translation, it holds that ${\rm vol}(\mathcal{W}(v,\bm{\theta},\rho))={\rm vol}(f(\mathbb{W},\bm{\theta}))$. Also, it is clear that $f(\mathbb{W},\theta_{\rm min} \bm{1}) \subseteq f(\mathbb{W},\bm{\theta})\subseteq f(\mathbb{W},\theta_{\rm max} \bm{1})$. This implies that $(\theta_{\rm min})^{n_x}{\rm vol}(\mathbb{W}) = {\rm vol}(f(\mathbb{W},\theta_{\rm min} \bm{1})) \leq {\rm vol}(f(\mathbb{W},\bm{\theta}))  \leq {\rm vol}(f(\mathbb{W},\theta_{\rm max} \bm{1})) =(\theta_{\rm max})^{n_x} {\rm vol}(\mathbb{W})$, and the desired inequalities follow.
\end{proof}

\subsection{Initialisation}
We next compute minimal sets in the form of \eqref{Eq:W_parameterised homothetic} while containing all points in the information set, and such sets are not more conservative than $\mathbb{W}$ as ensured by Lemma \ref{lemma:properties of parameterisation}. This is formulated as the following optimisation problem at the initial time step
\begin{align} 
    &\min_{v,\bm{\theta},\rho} \bm{1}^T  \bm{\theta} + \rho  \label{quanset for homothetic mpc}\\
    &\begin{aligned} \text{ s.t. }~
        &V_w w^s \leq \bm{\theta} + (1-\rho)V_w v, \forall w^s\in \mathcal{I}_{0}, \\
		&V_w v\leq \bm{1}, \	 \bm{\theta}\in [0,1]^{n_v}, \ \rho\in[0,1], ~\bm{\theta} \leq \rho\bm{1}.
    \end{aligned} \nonumber
\end{align}
As it is generally intractable to compute volumes of convex polytopes exactly, this problem aims to minimise an upper bound on ${\rm vol}(\mathcal{W}(v,\bm{\theta},\rho))$ provided in Lemma \ref{lemma:properties of parameterisation}. Although \eqref{quanset for homothetic mpc} is nonlinear due to the product of $\rho$ and $v$, the following proposition shows that it can be reformulated as an LP. 

\begin{proposition}\label{prop:reformulate scenario program}
Under Assumption \ref{Ass},	the optimal solution to problem \eqref{quanset for homothetic mpc} can be obtained by solving the following LP
\begin{align}
&\min_{y,\bm{\theta},\rho}  \bm{1}^T  \bm{\theta} + \rho   \label{quansetLP for homothetic mpc}\\
    &\begin{aligned} \text{ s.t. }~
        &V_w w^s \leq \bm{\theta} + V_w y, \forall w^s\in \mathcal{I}_{0}, \\
        &V_w y\leq (1-\rho)\bm{1}, \	 \bm{\theta}\in [0,1]^{n_v}, \ \rho\in[0,1], ~\bm{\theta} \leq \rho\bm{1}.
    \end{aligned}\nonumber
\end{align}
\end{proposition}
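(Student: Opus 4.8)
The plan is to recognise that the only source of nonlinearity in \eqref{quanset for homothetic mpc} is the bilinear product $(1-\rho)v$, and that $v$ enters the constraints exclusively through the term $(1-\rho)V_w v$. I would therefore linearise via the lifting $y := (1-\rho)v$, under which this term becomes $V_w y$. Since both objectives equal $\bm{1}^T\bm{\theta} + \rho$ and depend on neither $v$ nor $y$, it suffices to show that the two feasible regions have identical projections onto the $(\bm{\theta},\rho)$ coordinates; the optimal values then coincide, and an optimiser of \eqref{quansetLP for homothetic mpc} yields an optimiser of \eqref{quanset for homothetic mpc} through the (inverse) substitution. I would establish the projection equality by proving the two inclusions separately.

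For the forward inclusion, take any $(v,\bm{\theta},\rho)$ feasible for \eqref{quanset for homothetic mpc} and set $y=(1-\rho)v$. The sample constraints transform exactly, because $(1-\rho)V_w v = V_w y$; and since $V_w v \le \bm{1}$ with $1-\rho\ge 0$, multiplying through gives $V_w y = (1-\rho)V_w v \le (1-\rho)\bm{1}$, which is precisely the second constraint of \eqref{quansetLP for homothetic mpc}. The constraints on $\bm{\theta}$ and $\rho$ are untouched, so $(y,\bm{\theta},\rho)$ is feasible for the LP with the same objective value.

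For the reverse inclusion, take $(y,\bm{\theta},\rho)$ feasible for \eqref{quansetLP for homothetic mpc} and recover a suitable $v$. When $\rho<1$ this is immediate: set $v=y/(1-\rho)$, so that dividing $V_w y \le (1-\rho)\bm{1}$ by $1-\rho>0$ gives $V_w v \le \bm{1}$, i.e.\ $v\in\mathbb{W}$, while $(1-\rho)V_w v = V_w y$ restores the original sample constraints. The main obstacle is the boundary case $\rho=1$, where the division $y/(1-\rho)$ is undefined and the lifting is no longer invertible. I would resolve this by noting that $\rho=1$ forces $(1-\rho)\bm{1}=\bm{0}$, so the LP constraint $V_w y \le \bm{0}$ combined with $V_w w^s \le \bm{\theta}+V_w y$ yields $V_w w^s \le \bm{\theta}$ for every $w^s\in\mathcal{I}_0$; this is exactly the original sample constraint at $\rho=1$, where the term $(1-\rho)V_w v$ vanishes. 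Hence any $v\in\mathbb{W}$ (nonempty by Assumption~\ref{Ass}) is a valid witness, completing the inclusion.

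Combining the two inclusions shows the projected feasible sets coincide, so \eqref{quanset for homothetic mpc} and \eqref{quansetLP for homothetic mpc} share the same optimal value. An optimal $(y^\star,\bm{\theta}^\star,\rho^\star)$ of the LP then produces an optimal $(v^\star,\bm{\theta}^\star,\rho^\star)$ of \eqref{quanset for homothetic mpc} by taking $v^\star=y^\star/(1-\rho^\star)$ when $\rho^\star<1$ and any $v^\star\in\mathbb{W}$ when $\rho^\star=1$, which is the claimed reformulation.
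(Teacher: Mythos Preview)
Your proposal is correct and follows essentially the same approach as the paper: both linearise via the substitution $y=(1-\rho)v$, establish that the forward map preserves feasibility and objective value, and invert via $v=y/(1-\rho)$ when $\rho<1$. The only minor difference is the treatment of the boundary case $\rho=1$: the paper invokes compactness of $\mathbb{W}$ (Assumption~\ref{Ass}) to deduce $\{w\mid V_w w\le \bm{0}\}=\{\bm{0}\}$ and hence $y=\bm{0}$, then takes $v=\bm{0}$; you instead observe directly that $V_w y\le \bm{0}$ gives $V_w w^s\le \bm{\theta}+V_w y\le \bm{\theta}$, so the sample constraints transfer and any $v\in\mathbb{W}$ works---a slightly lighter argument that does not require boundedness, only nonemptiness.
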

\begin{proof}
Any feasible point $(v,\bm{\theta},\rho)$ of problem \eqref{quanset for homothetic mpc} maps to a feasible point $(y,\bm{\theta},\rho)$ for \eqref{quansetLP for homothetic mpc} with the same objective value under the transformation $y\!=\!(1\!-\!\rho)v$. Likewise, the transformation $v\!=\!y/(1\!-\!\rho)$ maps any feasible point $(y,\bm{\theta},\rho)$ of problem \eqref{quansetLP for homothetic mpc} such that $\rho \in [0,1)$ to a feasible point $(v,\bm{\theta},\rho)$ of \eqref{quanset for homothetic mpc} with the same objective value, while a feasible point $(y,\bm{\theta},1)$ of \eqref{quansetLP for homothetic mpc} necessarily has $y=\bm{0}$ (since $\{w\in\mathbb{R}^{n_x}|V_w w \leq \bm{0}\}=\{\bm{0}\}$ following from the boundedness of $\mathbb{W}$ in Assumption \ref{Ass}) and therefore corresponds to a feasible point $(\bm{0},\bm{\theta},1)$ of \eqref{quanset for homothetic mpc} with the same objective value.
\end{proof}

We denote by $(y^*_0,\bm{\theta}^*_0,\rho^*_0)$ the optimiser of \eqref{quansetLP for homothetic mpc}. Then the optimal solution to \eqref{quanset for homothetic mpc} is given by $(v^*_0,\bm{\theta}^*_0,\rho^*_0)$, where $v_0^*=y^*_0/(1-\rho^*_0)$ if $\rho^*_0\neq 1$ and otherwise $v_0^*=\bm{0}$. We can now define the learned disturbance set at the initial time step as
\[ \hat{\mathbb{W}}_{0}:=\mathcal{W}(v^*_0,\bm{\theta}_0^*,\rho^*_0). \]
Next we provide a statistical gap between $\hat{\mathbb{W}}_{0}$ and $\mathbb{W}_{\rm true}$. 

\begin{theorem}[\!\!{\cite[Theorem 4]{5531078}}]\label{Theo:homotheticrisk}
	Suppose Assumptions \ref{Ass} and \ref{assumption: iid samples} hold. Given $\epsilon\in (0,1)$,  $\delta\in (0,1)$, and  Euler's constant $\rm{e}$, if
	$ 
		|\mathcal{I}_0|\geq \frac{1}{\epsilon} \frac{\rm{e}}{\rm{e}-1} \left( n_x+n_v+\ln\frac{1}{\delta}  \right)$,
	then, with probability no less than $1-\delta$, we have that
	$  
		{\rm Pr}[w\in \mathbb{W}_{\rm true}: w\notin \hat{\mathbb{W}}_{0}]\leq  \epsilon$.
\end{theorem}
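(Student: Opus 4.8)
The plan is to recognise Theorem~\ref{Theo:homotheticrisk} as a direct instance of the scenario (random convex program) sample-complexity result cited, and to verify that every hypothesis of that result is met by the set-learning program. First I would work with the convex reformulation \eqref{quansetLP for homothetic mpc} rather than the original \eqref{quanset for homothetic mpc}: by Proposition~\ref{prop:reformulate scenario program} the two share optimisers, and, crucially, \eqref{quansetLP for homothetic mpc} has a linear objective and linear sample constraints $V_w w^s \leq \bm{\theta} + V_w y$, so it is a genuine convex scenario program in the decision vector $(y,\bm{\theta},\rho)$. This convexity is precisely what lets the cited theorem apply; the original program is bilinear in $(v,\rho)$ and would not qualify directly.

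Next I would set up the scenario dictionary. Each sample $w^s\in\mathcal{I}_0$ contributes the constraint block $V_w w^s \leq \bm{\theta} + V_w y$, and under Assumption~\ref{assumption: iid samples} these samples are i.i.d.\ from ${\rm Pr}$, matching the i.i.d.\ scenario requirement. For a candidate $(y,\bm{\theta},\rho)$ I would define its violation probability $V := {\rm Pr}\{w : V_w w \not\leq \bm{\theta} + V_w y\}$. The key identification is that, writing $y^*_0=(1-\rho^*_0)v^*_0$, the learned set obeys $\hat{\mathbb{W}}_0 = \{w : V_w w \leq \bm{\theta}^*_0 + V_w y^*_0\}$, so the event $\{w \notin \hat{\mathbb{W}}_0\}$ is \emph{exactly} the violation event of the scenario optimiser. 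Since the support of ${\rm Pr}$ is $\mathbb{W}_{\rm true}$, the quantity ${\rm Pr}[w\in\mathbb{W}_{\rm true}: w\notin\hat{\mathbb{W}}_0]$ coincides with this violation probability, so the claim reduces to bounding $V$ at the optimiser.

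I would then discharge the remaining regularity conditions: the program is always feasible (taking $\bm{\theta}=\bm{1}$, $\rho=1$, $y=\bm{0}$ gives $\hat{\mathbb{W}}=\mathbb{W}\supseteq\mathbb{W}_{\rm true}$, and this point satisfies $\bm{\theta}\leq\rho\bm{1}$), the feasible set is closed with the optimum attained, and—where the cited result asks for a unique minimiser—I would either invoke the standard convex-duality argument rendering the optimiser generically unique or append a fixed tie-breaking rule, as is customary in scenario theory. With these in place, applying the cited bound with $d$ equal to the number of decision variables yields: whenever $|\mathcal{I}_0|\geq \frac{1}{\epsilon}\,\frac{\rm{e}}{\rm{e}-1}\left(n_x+n_v+\ln\frac{1}{\delta}\right)$, one has ${\rm Pr}\{V>\epsilon\}\leq\delta$, i.e.\ $V\leq\epsilon$ with probability at least $1-\delta$, which is the assertion.

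The main obstacle I anticipate is the bookkeeping of the dimension $d$: the decision vector $(y,\bm{\theta},\rho)$ has $n_x+n_v+1$ components, whereas the stated bound carries only $n_x+n_v$, so I would need to match the exact convention of the cited theorem—which typically places the Helly-type dimension $d-1$ in the logarithmic sample bound—to recover the constant in front precisely. The second delicate point is the non-degeneracy/uniqueness hypothesis underpinning the clean $1-\delta$ guarantee; verifying it (or circumventing it by a tie-break) is the only step requiring genuine care beyond invoking the citation.
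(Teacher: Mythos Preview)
The paper provides no proof of this theorem; it simply cites the scenario-optimisation sample-complexity result and states the bound, so your verification that the hypotheses of the cited theorem are met by the LP~\eqref{quansetLP for homothetic mpc} is exactly the implicit argument behind the citation and is correct. Your flagged concern about the dimension count is well placed: the resolution is that the sampled constraints $V_w w^s \leq \bm{\theta}+V_w y$ involve only $(y,\bm{\theta})\in\mathbb{R}^{n_x+n_v}$, with $\rho$ appearing solely in deterministic constraints, so the Helly/support-constraint dimension relevant to the cited bound is $n_x+n_v$ rather than $n_x+n_v+1$.
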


\subsection{Online Update}
After initialisation of the learned disturbance set, we update it at each time step by exploiting disturbance realisations observed online to improve our approximation for $\mathbb{W}_{\rm true}$. The learned disturbance sets $\hat{\mathbb{W}}_k$ are obtained by solving the following problem at time steps $k=1,2,\ldots$
\begin{subequations}\label{eq:online update of learned disturbance set}
\begin{align}
    &\hspace{-1.0cm}(v^*_k,\bm{\theta}^*_k,\rho^*_k):=\argmin_{v,\bm{\theta},\rho} \bm{1}^T \bm{\theta} + \rho\\
        ~~\text{ s.t. } &\bm{\theta}^*_{k-1} + (1\!-\!\rho^*_{k-1})V_w v^*_{k-1} \leq \bm{\theta} + (1\!-\!\rho)V_w v, \label{eq: set inclusion among learned disturbance sets}\\
     &V_w w_{k-1} \leq \bm{\theta} + (1\!-\!\rho)V_w v,   \label{eq:new sample in the set}\\
     &V_w v\leq \bm{1}, \ \bm{\theta}\in [0,1]^{n_v}, \ \rho\in[0,1], ~\bm{\theta} \leq \rho\bm{1},
    \end{align}
\end{subequations}
and we define $\hat{\mathbb{W}}_{k}:=\mathcal{W}(v^*_k,\bm{\theta}_k^*,\rho^*_k)$. Constraint \eqref{eq:new sample in the set} ensures that the most up-to-date realisation of disturbance is contained in $\hat{\mathbb{W}}_{k}$. The constraint \eqref{eq: set inclusion among learned disturbance sets} ensures that $\hat{\mathbb{W}}_{k-1}\subseteq \hat{\mathbb{W}}_{k}$ and, together with \eqref{eq:new sample in the set}, avoids imposing that $w^s\in \mathcal{W}(v,\bm{\theta},\rho)$ for all $w^s\in \mathcal{I}_k$, which would lead to increasing computational complexity over time. Problem \eqref{eq:online update of learned disturbance set} can be reformulated equivalently as an LP using the same reasoning as in Proposition \ref{prop:reformulate scenario program}. Given the nestedness property of learned disturbance sets, $\hat{\mathbb{W}}_k$ for $k\in\mathbb{N}_{\geq 1}$ enjoys the same level of approximation accuracy as $\hat{\mathbb{W}}_0$, as shown in Theorem \ref{thm: probabilistic guarantee for online updated learned disturbance sets}.
\begin{theorem}\label{thm: probabilistic guarantee for online updated learned disturbance sets}
	Suppose Assumptions \ref{Ass} and \ref{assumption: iid samples} hold. Given $\epsilon\in (0,1)$,  $\delta\in (0,1)$, and  Euler's constant $\rm{e}$, if
	$ 
		|\mathcal{I}_0|\geq \frac{1}{\epsilon} \frac{\rm{e}}{\rm{e}-1} \left( n_x+n_v + \ln\frac{1}{\delta}  \right)$,
	then, with probability no less than $1-\delta$, we have that
	$ 
		{\rm Pr}[w\in \mathbb{W}_{\rm true}: w\notin \hat{\mathbb{W}}_{k}]\leq  \epsilon, ~\forall k \in \mathbb{N}
	$.
\end{theorem}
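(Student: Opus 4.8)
The plan is to reduce the claim to Theorem~\ref{Theo:homotheticrisk}, which already certifies the approximation accuracy of $\hat{\mathbb{W}}_0$, by exploiting the nestedness of the learned sets. Concretely, I would show that $\hat{\mathbb{W}}_0\subseteq\hat{\mathbb{W}}_k$ for every $k\in\mathbb{N}$ as a \emph{deterministic} (pathwise) consequence of the online update \eqref{eq:online update of learned disturbance set}, and then observe that the violation event for $\hat{\mathbb{W}}_k$ is contained in that for $\hat{\mathbb{W}}_0$. Since the sample-complexity bound on $|\mathcal{I}_0|$ is identical in the two theorems, the confidence level transfers verbatim.

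First I would establish the one-step inclusion $\hat{\mathbb{W}}_{k-1}\subseteq\hat{\mathbb{W}}_k$. Recall from the proof of Lemma~\ref{lemma:properties of parameterisation} that $\mathcal{W}(v,\bm{\theta},\rho)=\{w\mid V_w w\leq \bm{\theta}+(1-\rho)V_w v\}$, so each learned set is a polytope with right-hand side vector $\bm{\theta}^*_k+(1-\rho^*_k)V_w v^*_k$. Because $(v^*_k,\bm{\theta}^*_k,\rho^*_k)$ is feasible for \eqref{eq:online update of learned disturbance set}, constraint \eqref{eq: set inclusion among learned disturbance sets} gives $\bm{\theta}^*_{k-1}+(1-\rho^*_{k-1})V_w v^*_{k-1}\leq \bm{\theta}^*_k+(1-\rho^*_k)V_w v^*_k$; hence any $w$ with $V_w w\leq \bm{\theta}^*_{k-1}+(1-\rho^*_{k-1})V_w v^*_{k-1}$ also satisfies the $k$-th inequality, i.e. $\hat{\mathbb{W}}_{k-1}\subseteq\hat{\mathbb{W}}_k$. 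Iterating yields $\hat{\mathbb{W}}_0\subseteq\hat{\mathbb{W}}_1\subseteq\cdots\subseteq\hat{\mathbb{W}}_k$ for all $k$. I would also note in passing that \eqref{eq:online update of learned disturbance set} is always feasible: the choice $(v,\bm{\theta},\rho)=(\bm{0},\bm{1},1)$ makes both right-hand sides equal to $\bm{1}$, and constraints \eqref{eq: set inclusion among learned disturbance sets}--(d) hold because $\bm{\theta}^*_{k-1}+(1-\rho^*_{k-1})V_w v^*_{k-1}\leq\bm{1}$ (the elementwise bound established in the proof of Lemma~\ref{lemma:properties of parameterisation}(1)) and $w_{k-1}\in\mathbb{W}_{\rm true}\subseteq\mathbb{W}$ give $V_w w_{k-1}\leq\bm{1}$, so the update is well defined along every trajectory.

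Finally I would combine these two ingredients. The inclusion $\hat{\mathbb{W}}_0\subseteq\hat{\mathbb{W}}_k$ implies, for each $k$, that $\{w\in\mathbb{W}_{\rm true}\mid w\notin\hat{\mathbb{W}}_k\}\subseteq\{w\in\mathbb{W}_{\rm true}\mid w\notin\hat{\mathbb{W}}_0\}$, whence ${\rm Pr}[w\in\mathbb{W}_{\rm true}: w\notin\hat{\mathbb{W}}_k]\leq {\rm Pr}[w\in\mathbb{W}_{\rm true}: w\notin\hat{\mathbb{W}}_0]$. By Theorem~\ref{Theo:homotheticrisk}, the stated bound on $|\mathcal{I}_0|$ guarantees ${\rm Pr}[w\in\mathbb{W}_{\rm true}: w\notin\hat{\mathbb{W}}_0]\leq\epsilon$ with probability at least $1-\delta$ over the draw of $\mathcal{I}_0$; on that same event the chain of inclusions forces ${\rm Pr}[w\in\mathbb{W}_{\rm true}: w\notin\hat{\mathbb{W}}_k]\leq\epsilon$ for every $k$ simultaneously.

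The step I expect to be the conceptual crux is that the ``for all $k\in\mathbb{N}$'' quantifier is obtained \emph{without} any union bound. The randomness controlling the confidence lives entirely in $\mathcal{I}_0$, while the nestedness is a deterministic property of the optimizers that holds for every realization of the online disturbances $w_0,w_1,\ldots$; thus a single good event of probability $1-\delta$ certifies all $\hat{\mathbb{W}}_k$ at once, and no degradation of $\delta$ (which an infinite union bound would incur) is needed.
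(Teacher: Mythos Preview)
Your proposal is correct and follows exactly the approach the paper takes: the paper's proof is the single sentence that the result is a direct consequence of Theorem~\ref{Theo:homotheticrisk} and the nestedness $\hat{\mathbb{W}}_k\subseteq\hat{\mathbb{W}}_{k+1}$. Your write-up simply spells out why constraint~\eqref{eq: set inclusion among learned disturbance sets} yields that nestedness and why no union bound is needed, which is a faithful elaboration of the same argument.
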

This result is a direct consequence of Theorem \ref{Theo:homotheticrisk} and the fact that $\hat{\mathbb{W}}_{k}\subseteq \hat{\mathbb{W}}_{k+1}$ holds for all $k\in\mathbb{N}$.

\section{Learning-based Homothetic Tube MPC} \label{sec: Learning-based MPC}
In this section, we exploit sets $\hat{\mathbb{W}}_{k}$ and design a homothetic tube MPC algorithm, where cross-sections of the tube are uniform scalings of the set $\mathbb{S}$ satisfying Assumption \ref{ass:robust invariance}. Recalling preliminaries in Section \ref{sec: preliminary}, we formulate the MPC optimisation based on $\hat{\mathbb{W}}_{k}$ at time step $k$ as
\begin{eqnarray}\label{Eq:QuanHomTubeMPCOPT}
 \begin{aligned}
     &\min\limits_{s_{0|k},\bm{c}_k,\bm{\alpha}_k\in\mathbb{R}_{+}^N} J(s_{0|k}, \bm{c}_k,\bm{\alpha}_k)    \\
		&{~\rm s.t.}
		\begin{cases}
			x_k-s_{0|k}\in \alpha_{0|k}\mathbb{S}, \\
			\alpha_{i|k} e_{\rm max} \!+\hat{w}_{{\rm max},k} \!\leq\!   \alpha_{i+1|k}\bm{1}, \forall i\in\mathbb{N}_{[0,N-1]},  \\
			\alpha_{i|k}=1,  \forall i\in\mathbb{N}_{\geq N}, \\
                \bar{F}\Psi^i \!\begin{bmatrix}
				s_{0|k} \\
				\bm{c}_k
			\end{bmatrix}\!\!\leq\! \bm{1}-\alpha_{i|k}h, 	 \forall i\in\mathbb{N}_{[0,\nu]}.
		\end{cases}
 \end{aligned}
\end{eqnarray}
This problem is denoted by ${\rm OPT}(\hat{\mathbb{W}}_k, \hat{w}_{{\rm max},k},\nu)$, where 
\begin{equation}\label{eq:update w_max}
    \hat{w}_{{\rm max},k}:=\max_{w\in \hat{\mathbb{W}}_k}  V_s w,
\end{equation}
and $\nu$ is no less than $N-1$ and is the smallest integer such that $\bar{F}\Psi^{\nu+1}z \leq \bm{1}-h$ holds for all $z$ and $\bm{\alpha}:=(\alpha_0,\ldots,\alpha_{N-1})$ in the set 
\begin{align*}
    &\Omega(\hat{w}_{{\rm max},0},\nu):= \\
    &\negmedspace{}\bigl\{(z,\bm{\alpha})\in \mathbb{R}^{n_x + N n_u}\!\times \mathbb{R}_+^N |\,\bar{F}\Psi^{i}z \leq \bm{1}-\alpha_i h, \forall i \in \mathbb{N}_{[0,\nu]}, \\
    &\alpha_i e_{\rm max} \!+\! \hat{w}_{{\rm max},0} \!\leq\! \alpha_{i+1} \bm{1}, \forall i \!\in\! \mathbb{N}_{[0,N-1]}, \ \alpha_i\!=\!1, \forall i\!\in\!\mathbb{N}_{\geq N}    \!\bigr\}.
\end{align*}

In problem \eqref{Eq:QuanHomTubeMPCOPT}, it remains a feasible choice to limit the number of variables by setting $\alpha_{i|k}=1$ for all $i\in \mathbb{N}_{\geq N}$. Specifically, it holds true that $e_{\rm max} +\hat{w}_{{\rm max},k} \leq  \bm{1}$ for all $k\in\mathbb{N}$. This follows from that $\mathbb{S}$ satisfies Assumption \ref{ass:robust invariance} and $\hat{\mathbb{W}}_k \subseteq \mathbb{W}$ for all $k\in\mathbb{N}$ as shown in Lemma \ref{lemma:properties of parameterisation}.

\subsection{Effect of $\nu$ on Constraint Handling}
We next elaborate on the design of integer $\nu$ in \eqref{Eq:QuanHomTubeMPCOPT} and a desirable result it yields. Analogously to the definition of $\nu$, we define $\nu_k$ as the integer that is no less than $N-1$ and is the smallest such that $\max_{(z,\bm{\alpha})\in\Omega(\hat{w}_{{\rm max},k},\nu_k)}\bar{F}\Psi^{\nu_k+1}z \leq \bm{1}-h$ holds, and $\nu_0=\nu$. By Lemma \ref{lemma:condition on tau}, such an integer $\nu_k$ ensures that $\Omega(\hat{w}_{{\rm max},k},\nu_k)=\Omega(\hat{w}_{{\rm max},k},\infty)$ and avoids imposing in ${\rm OPT}(\hat{\mathbb{W}}_k, \hat{w}_{{\rm max},k},\nu_k)$ that $\bar{F}\Psi^i \!\begin{bmatrix}
				s_{0|k} \\
				\bm{c}_k
			\end{bmatrix}\!\!\leq\! \bm{1}-\alpha_{i|k}h$ for all $i\in\mathbb{N}$, which is a set of infinitely many constraints. However, it is expected that $\nu_k$ depends on $\hat{w}_{{\rm max},k}$, which could be changing over time, and this requires that $\nu_k$ is recomputed online at each time step $k$. The fixed integer $\nu$ addresses this issue and therefore reduces the online computational burden, as shown in the following result. 
   \begin{lemma}
   Under Assumption \ref{Ass}, it holds that $\nu_{k+1}\leq \nu_k$ for all $k\in\mathbb{N}$ and therefore problem ${\rm OPT}(\hat{\mathbb{W}}_k, \hat{w}_{{\rm max},k},\nu)$ is equivalent to ${\rm OPT}(\hat{\mathbb{W}}_k, \hat{w}_{{\rm max},k},\nu_k)$ for all $k\in\mathbb{N}$.
   \end{lemma}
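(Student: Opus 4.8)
The plan is to establish the monotonicity $\nu_{k+1}\le\nu_k$ by tracking how the constraint set $\Omega(\hat{w}_{{\rm max},k},\nu)$ varies with $k$, and then to read off the equivalence from Lemma~\ref{lemma:condition on tau}. The starting point is the nestedness $\hat{\mathbb{W}}_k\subseteq\hat{\mathbb{W}}_{k+1}$ established after Theorem~\ref{thm: probabilistic guarantee for online updated learned disturbance sets}. Since $\hat{w}_{{\rm max},k}=\max_{w\in\hat{\mathbb{W}}_k}V_s w$ is a componentwise maximum over a nested family of sets, it follows at once that $\hat{w}_{{\rm max},k}\le\hat{w}_{{\rm max},k+1}$. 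The disturbance bound enters $\Omega(\hat{w}_{{\rm max},k},\nu)$ only through the recursion $\alpha_i e_{\rm max}+\hat{w}_{{\rm max},k}\le\alpha_{i+1}\bm{1}$; increasing it tightens this inequality while leaving the constraints on $z$ and the terminal conditions $\alpha_i=1$ untouched. Hence every $(z,\bm{\alpha})$ feasible at step $k+1$ is also feasible at step $k$, yielding the key inclusion $\Omega(\hat{w}_{{\rm max},k+1},\nu)\subseteq\Omega(\hat{w}_{{\rm max},k},\nu)$ for each fixed $\nu$.

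Next I would combine this inclusion with the definition of $\nu_k$. Holding the horizon fixed at $\nu=\nu_k$, the maximum of $\bar{F}\Psi^{\nu_k+1}z$ over the smaller set $\Omega(\hat{w}_{{\rm max},k+1},\nu_k)$ cannot exceed the corresponding maximum over $\Omega(\hat{w}_{{\rm max},k},\nu_k)$, which by definition of $\nu_k$ already satisfies the bound $\le\bm{1}-h$. Thus $\nu_k$ meets the defining condition at step $k+1$, and since $\nu_{k+1}$ is the smallest such integer we conclude $\nu_{k+1}\le\nu_k$. The step to guard most carefully is that $\nu$ appears both in the power $\Psi^{\nu+1}$ and inside $\Omega(\cdot,\nu)$; the argument is valid precisely because I keep $\nu=\nu_k$ fixed across the two time steps, so that only the disturbance bound---and hence only the feasible set---changes, and the monotone shrinkage of $\Omega$ can only decrease the maximum.

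Finally, for the equivalence I would telescope $\nu_{k+1}\le\nu_k$ to obtain $\nu_k\le\nu_0=\nu$ for all $k$. As $\nu\ge\nu_k\ge N-1$, problem ${\rm OPT}(\hat{\mathbb{W}}_k,\hat{w}_{{\rm max},k},\nu)$ merely appends the extra constraints $\bar{F}\Psi^i[s_{0|k}^T\ \bm{c}_k^T]^T\le\bm{1}-\alpha_{i|k}h$ for $i\in\mathbb{N}_{[\nu_k+1,\nu]}$ to those of ${\rm OPT}(\hat{\mathbb{W}}_k,\hat{w}_{{\rm max},k},\nu_k)$, so the feasible set of the former is contained in that of the latter. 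Conversely, because $n=\nu_k$ satisfies the hypotheses of Lemma~\ref{lemma:condition on tau}, the feasible set of the $\nu_k$-problem equals the infinite-horizon feasible set and therefore already satisfies the predicted-trajectory constraints for every $i\le\nu$; the two feasible sets thus coincide. Since the cost $J$ is identical in the two problems, they share the same optimal value and minimisers, which establishes the claimed equivalence.
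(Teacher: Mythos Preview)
Your argument is correct and follows essentially the same route as the paper's proof: nestedness of $\hat{\mathbb{W}}_k$ gives $\hat{w}_{{\rm max},k}\le\hat{w}_{{\rm max},k+1}$, hence $\Omega(\hat{w}_{{\rm max},k+1},n)\subseteq\Omega(\hat{w}_{{\rm max},k},n)$, and so $\nu_k$ already certifies the condition at step $k+1$, yielding $\nu_{k+1}\le\nu_k\le\nu$; the equivalence is then read off from Lemma~\ref{lemma:condition on tau}. Your write-up is actually more explicit than the paper's in spelling out the two-sided inclusion of feasible sets for the equivalence, whereas the paper simply invokes Lemma~\ref{lemma:condition on tau} directly.
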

   \begin{proof}
        By Assumption \ref{Ass}, $\mathbb{W}$ is compact. Then, $\hat{\mathbb{W}}_k$ is compact and $\hat{w}_{{\rm max},k}$ is well defined for all $k\in\mathbb{N}$. The nestedness property of $\hat{\mathbb{W}}_k$ implies that $\hat{w}_{{\rm max},k}\leq \hat{w}_{{\rm max},k+1}$ for all $k\in\mathbb{N}$, which yields that
        \begin{equation}\label{eq: set containment of Omega}
            \Omega(\hat{w}_{{\rm max},k+1},n)\subseteq \Omega(\hat{w}_{{\rm max},k},n),~\forall k\in\mathbb{N},
        \end{equation}
        where $n$ is an integer that is no less than $N-1$. By the definition of $\nu_k$, it holds that $\max_{(z,\bm{\alpha})\in\Omega(\hat{w}_{{\rm max},k},\nu_k)}\bar{F}\Psi^{\nu_k+1}z \leq \bm{1}-h$, which, together with \eqref{eq: set containment of Omega}, implies that 
        \begin{equation}\label{eq: ineq relate nu_k to nu_k+1}
            \max_{(z,\bm{\alpha})\in\Omega(\hat{w}_{{\rm max},k+1},\nu_k)}\bar{F}\Psi^{\nu_k+1}z \leq \bm{1}-h. 
        \end{equation}
        As $\nu_{k+1}$ is the smallest integer no less than $N-1$ such that \eqref{eq: ineq relate nu_k to nu_k+1} holds, we have that $\nu_{k+1} \leq \nu_k \leq \nu $ for all $k\in\mathbb{N}$. Then, the desired result follows from Lemma \ref{lemma:condition on tau}.  
   \end{proof}
   
We note that $\nu$ can be obtained of\mbox{}f\mbox{}line given $h$ and $\hat{w}_{{\rm max},0}$. Since the sequence $\{\nu_k\}_{k=0}^\infty$ is monotonically non-increasing, one can replace $\nu$ in problem \eqref{Eq:QuanHomTubeMPCOPT} at a later stage with a feasible and possibly smaller integer to further reduce the number of constraints, if computational resources permit. 

\subsection{Probabilistic Recursive Feasibility}
Given the design of integer $\nu$, we leverage Lemma \ref{lemma:conventional recursive feasibility} and analyse recursive feasibility of problem \eqref{Eq:QuanHomTubeMPCOPT} from a probabilistic point of view.
\begin{proposition}
    Suppose conditions in Theorem \ref{thm: probabilistic guarantee for online updated learned disturbance sets} are satisfied. If problem \eqref{Eq:QuanHomTubeMPCOPT} is feasible at time step $k$, then, with confidence no less than $1-\delta$, it is feasible at time step $k+1$ with probability at least $1-\epsilon$.
\end{proposition}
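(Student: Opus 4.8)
The plan is to peel off the two layers of randomness and then reduce to the deterministic recursive feasibility already available in Lemma~\ref{lemma:conventional recursive feasibility}. First I would invoke Theorem~\ref{thm: probabilistic guarantee for online updated learned disturbance sets} to fix the ``good'' event
\[
\mathcal{E}:=\{\,{\rm Pr}[w\in\mathbb{W}_{\rm true}:w\notin\hat{\mathbb{W}}_j]\leq\epsilon,\ \forall j\in\mathbb{N}\,\},
\]
which occurs with probability at least $1-\delta$ over the offline sample $\mathcal{I}_0$. Conditioning on $\mathcal{E}$ and using that $w_k$ is an i.i.d.\ draw independent of $\hat{\mathbb{W}}_k$ (the latter being built only from $\mathcal{I}_k$, i.e.\ from data preceding $w_k$), the realised disturbance obeys $w_k\in\hat{\mathbb{W}}_k$ with probability at least $1-\epsilon$. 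It therefore suffices to prove the deterministic implication: \emph{if \eqref{Eq:QuanHomTubeMPCOPT} is feasible at step $k$ and $w_k\in\hat{\mathbb{W}}_k$, then \eqref{Eq:QuanHomTubeMPCOPT} is feasible at step $k+1$.}

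For the deterministic part, the first move is to show that $w_k\in\hat{\mathbb{W}}_k$ freezes the learned set. Indeed, the previous optimiser $(v^*_k,\bm{\theta}^*_k,\rho^*_k)$ is feasible for the update \eqref{eq:online update of learned disturbance set} solved at step $k+1$: constraint \eqref{eq: set inclusion among learned disturbance sets} is met with equality, and \eqref{eq:new sample in the set} reduces to the hypothesis $w_k\in\hat{\mathbb{W}}_k$. I would argue that this optimiser remains optimal, so that $\hat{\mathbb{W}}_{k+1}=\hat{\mathbb{W}}_k$ and hence $\hat{w}_{{\rm max},k+1}=\hat{w}_{{\rm max},k}$; combined with $\nu_{k+1}\leq\nu_k\leq\nu$ from the preceding lemma, both the worst-case bound and the constraint count entering \eqref{Eq:QuanHomTubeMPCOPT} are unchanged. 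This collapses the time-varying problem into one with a \emph{fixed} disturbance set, exactly the hypothesis of Lemma~\ref{lemma:conventional recursive feasibility}.

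It then remains to confirm the shifted candidate. I would take $\begin{bmatrix}s_{0|k+1}\\\bm{c}_{k+1}\end{bmatrix}=\Psi\begin{bmatrix}s^*_{0|k}\\\bm{c}^*_k\end{bmatrix}$ and $\alpha_{i|k+1}=\alpha^*_{i+1|k}$. The initial-tube constraint holds because $x_{k+1}-s_{0|k+1}=\Phi(x_k-s^*_{0|k})+w_k\in\alpha^*_{0|k}\Phi\mathbb{S}\oplus\hat{\mathbb{W}}_k\subseteq\alpha^*_{1|k}\mathbb{S}$, where the last inclusion is the step-$k$ inequality $\alpha^*_{0|k}e_{\rm max}+\hat{w}_{{\rm max},k}\leq\alpha^*_{1|k}\bm{1}$ together with $w_k\in\hat{\mathbb{W}}_k$. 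The tightened constraints transfer via $\bar{F}\Psi^i\begin{bmatrix}s_{0|k+1}\\\bm{c}_{k+1}\end{bmatrix}=\bar{F}\Psi^{i+1}\begin{bmatrix}s^*_{0|k}\\\bm{c}^*_k\end{bmatrix}\leq\bm{1}-\alpha^*_{i+1|k}h$, valid for every $i\in\mathbb{N}$ by Lemma~\ref{lemma:condition on tau} and hence in particular for $i\in\mathbb{N}_{[0,\nu]}$. The shifted $\alpha$-recursion inherits \eqref{Eq:alpha inequality} from step $k$, with $\alpha_{N-1|k+1}=\alpha^*_{N|k}=1$ closing into the terminal row through $e_{\rm max}+\hat{w}_{{\rm max},k+1}\leq\bm{1}$. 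Reassembling, on $\mathcal{E}$ (confidence $1-\delta$) the event $w_k\in\hat{\mathbb{W}}_k$ (probability $1-\epsilon$) yields feasibility at $k+1$.

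The hard part will be the freezing claim of the second paragraph. The update LP minimises the proxy $\bm{1}^T\bm{\theta}+\rho$ rather than the set itself, and \eqref{eq: set inclusion among learned disturbance sets} only enforces $\hat{\mathbb{W}}_{k+1}\supseteq\hat{\mathbb{W}}_k$; I must rule out that some strictly larger admissible set attains a smaller proxy value, since any such enlargement would give $\hat{w}_{{\rm max},k+1}>\hat{w}_{{\rm max},k}$ and tighten the step-$(k+1)$ rows $\alpha^*_{i+1|k}e_{\rm max}+\hat{w}_{{\rm max},k+1}\leq\alpha^*_{i+2|k}\bm{1}$ beyond what the shifted $\alpha_{i|k+1}=\alpha^*_{i+1|k}$ satisfies. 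I would attack this either by establishing optimality of the retained solution under a consistent tie-break (so the set is genuinely unchanged), or, failing a clean uniqueness statement, by perturbing the candidate $\alpha_{i|k+1}$ upward to absorb $\hat{w}_{{\rm max},k+1}-\hat{w}_{{\rm max},k}$ and checking that the tightened constraints still hold using the margin furnished by the terminal condition in Lemma~\ref{lemma:condition on tau}.
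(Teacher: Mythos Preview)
Your approach is the paper's: invoke Theorem~\ref{thm: probabilistic guarantee for online updated learned disturbance sets} for the probabilistic layer and fall back on the recursive-feasibility mechanism of Lemma~\ref{lemma:conventional recursive feasibility} for the deterministic one. In fact the paper's proof is a single sentence to that effect and does not unpack the deterministic step at all; you have gone further and correctly isolated the one non-obvious point, namely that the step-$(k{+}1)$ problem uses $\hat{w}_{{\rm max},k+1}$ rather than $\hat{w}_{{\rm max},k}$, so the shifted candidate is not automatically feasible.

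Your flagged ``hard part'' is, however, easier than you fear. When $w_k\in\hat{\mathbb{W}}_k$, constraint~\eqref{eq:new sample in the set} is implied by~\eqref{eq: set inclusion among learned disturbance sets}, and the latter forces $r:=\bm{\theta}+(1-\rho)V_wv\geq r^*_k:=\bm{\theta}^*_k+(1-\rho^*_k)V_wv^*_k$. Since $r^*_k$ itself satisfies all the constraints that defined the step-$k$ problem (the previous nestedness row and sample row, or the rows over $\mathcal{I}_0$ when $k=0$), the step-$(k{+}1)$ feasible set is \emph{contained in} the step-$k$ feasible set; hence no admissible $(v,\bm{\theta},\rho)$ at $k{+}1$ can have objective strictly below $\bm{1}^T\bm{\theta}^*_k+\rho^*_k$. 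The previous optimiser is therefore optimal for~\eqref{eq:online update of learned disturbance set} at step $k{+}1$, and under the deterministic tie-break you already propose (retain the prior solution when still optimal) one obtains $\hat{\mathbb{W}}_{k+1}=\hat{\mathbb{W}}_k$ and $\hat{w}_{{\rm max},k+1}=\hat{w}_{{\rm max},k}$. Your shifted candidate then works verbatim, and the fallback of inflating $\alpha_{i|k+1}$ is unnecessary.
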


This result follows directly from Lemma \ref{lemma:conventional recursive feasibility} and Theorem \ref{thm: probabilistic guarantee for online updated learned disturbance sets}. Moreover, if it holds that $\mathbb{W}_{\rm true}\subseteq\hat{\mathbb{W}}_k$ at some time step $k$ and problem \eqref{Eq:QuanHomTubeMPCOPT} is feasible at this time step, then it remains feasible at all time steps after that.

\subsection{Computation}
 The computation at each time step involves solving the LP \eqref{eq:online update of learned disturbance set} for learning the uncertainty set, the LP~\eqref{eq:update w_max} for worst-case estimation, and the QP~\eqref{Eq:QuanHomTubeMPCOPT} for MPC. These problems are solved by polynomial-time algorithms. 

\section{Numerical Examples}\label{sec: examples}

\begin{figure}[t]
\centering
\subfloat[\label{Fig: disturbance set}]{\includegraphics[width = 0.5\columnwidth]{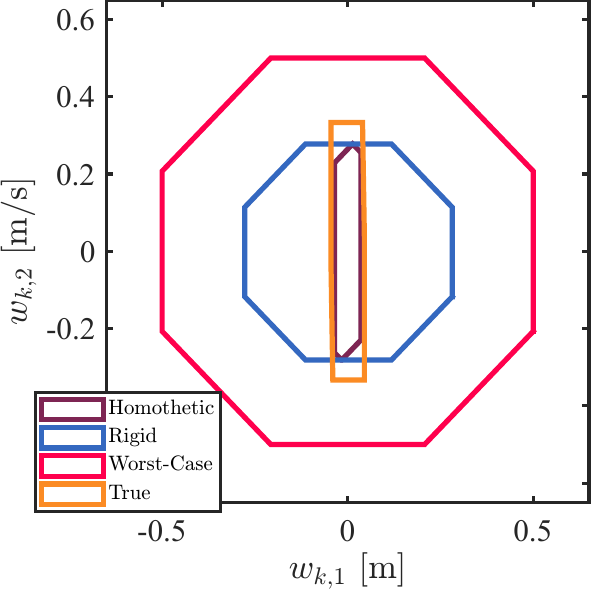}} 
\centering
\subfloat[\label{Fig: feasible region}]{\includegraphics[width = 0.5\columnwidth]{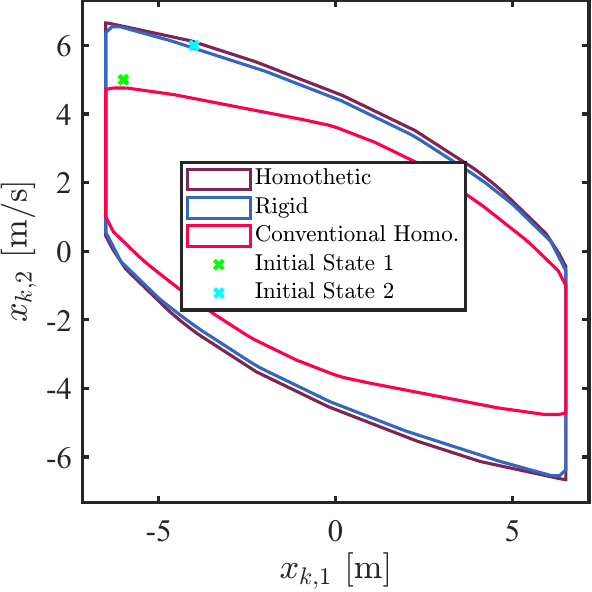}}
\caption{The disturbance set and feasible region. (a) Learned disturbance set by homothetic approach and rigid approach of~\cite{gao2023learning} with $|\mathcal{I}_0|=30$, the worst-case disturbance set, and the true disturbance set; (b) The initial feasible region of the learning-based homothetic MPC, the learning-based rigid MPC, and the conventional homothetic MPC with $|\mathcal{I}_0|=30$.}
\label{Fig: disturbance set and feasible region}
\end{figure}

\begin{figure}[t]
\centering
\subfloat[\label{Fig: path feasible}]{\includegraphics[width = 0.5\columnwidth]{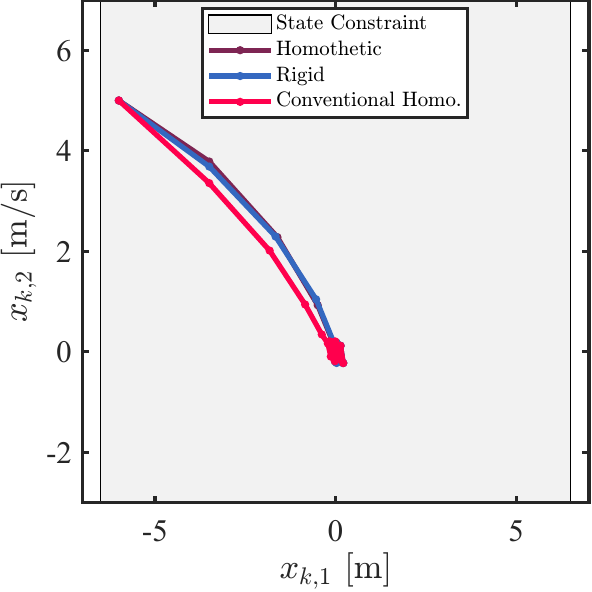}} 
\centering
\subfloat[\label{Fig: path infeasible}]{\includegraphics[width = 0.5\columnwidth]{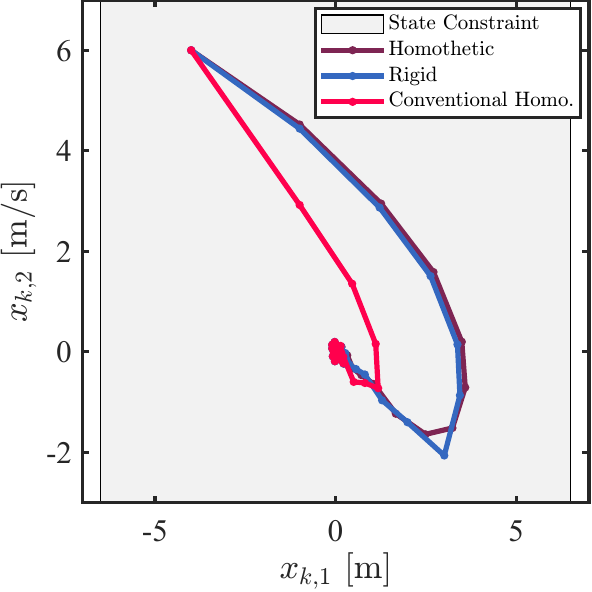}}
\caption{Trajectories and control inputs of the relative model. (a) State trajectories for three methods with the initial state as state 1 in Fig.~\ref{Fig: disturbance set and feasible region}(b); (b) State trajectory for three methods with the initial state as state 2 in Fig.~\ref{Fig: disturbance set and feasible region}(b).}
\label{Fig:path}
\end{figure}

\begin{table}[t]
\centering
\caption{Online Computation Performance}  
\label{tab: online computation performance} 
\begin{tabular}{cccc}
\toprule
\textbf{Method} & \textbf{Mean} & \textbf{STD} \\
\midrule
Learning-based Homothetic Tube MPC & $0.24 \ {\rm s}$ & $0.16 \ {\rm s}$  \\ \specialrule{0em}{1pt}{1pt}
Learning-based Rigid Tube MPC & $ 0.41 \ {\rm s}$ & $0.25 \ {\rm s}$  \\ \specialrule{0em}{1pt}{1pt} 
Conventional Homothetic Tube MPC  & $0.03 \ {\rm s}$ & $0.02 \ {\rm s}$  \\ \specialrule{0em}{1pt}{1pt}
\bottomrule
\end{tabular}
\begin{tablenotes}\tiny
	\item ${\star}$ STD means standard deviation. 
\end{tablenotes} 
\end{table}

\begin{table}[!t]
\centering
\caption{Empirical Robustness}  
\label{tab: empirical robustness} 
\begin{tabular}{cccc}
\toprule
$|\mathcal{I}_0|$ & Initial Relative State & Feasibility Rate & $\epsilon$\\
\midrule
100 & $[-6.5 \ 6.6831]^{\top}$ & $100 \%$ & $0.1739$ \\ \specialrule{0em}{1pt}{1pt}
1000 & $[-6.5 \ 6.5757]^{\top}$ & $100 \%$ & $0.0174$ \\ \specialrule{0em}{1pt}{1pt}
10000 & $[-6.5 \ 6.4529]^{\top}$ & $100 \%$ & $0.0017$ \\ \specialrule{0em}{1pt}{1pt}
\bottomrule
\end{tabular}
\begin{tablenotes}\tiny
\item $\star$ Parameter $\epsilon$ is computed according to Theorem~\ref{thm: probabilistic guarantee for online updated learned disturbance sets} with $\delta$ fixed as $0.05$.
\end{tablenotes} 
\end{table}

\begin{figure}
\centering
\subfloat[\label{Fig: path feasible 5V}]{\includegraphics[width =\columnwidth]{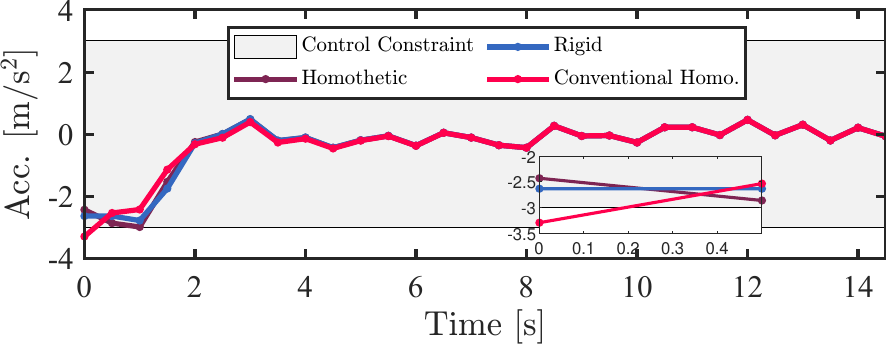}} \\
\centering
\subfloat[\label{Fig: path infeasible 5V}]{\includegraphics[width =1 \columnwidth]{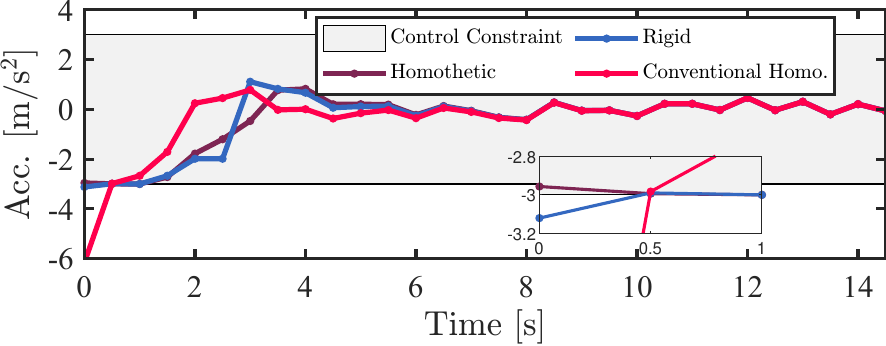}}
\caption{(a) Control inputs of the follower agent by three methods corresponding to the state trajectories in Fig.~\ref{Fig:path}(a); (b) Control inputs of the follower agent by three methods corresponding to the state trajectories in Fig.~\ref{Fig:path}(b).}
\label{Fig:control}
\end{figure}

In this section, we evaluate the efficacy of the proposed algorithm in a platooning-control problem of two agents, where a follower in the platooning needs to follow its leader and maintains a desired relative speed and relative distance. Their dynamics can be described by linear time-invariant models with superscripts $f$ and $l$ respectively, i.e.%
\begin{subequations}\label{eq: leader and follower models}
\begin{align}
x_{k+1}^f &= Ax_{k}^f + Bu_{k}^f + \xi_{k}^f,\\
x_{k+1}^l &= Ax_{k}^l + Bu_{k}^l+ \xi_{k}^l,
\end{align}
\end{subequations}
where $x_{k}^f := [p_{k}^f \ v_{k}^f]^{T}$ contains the longitudinal position $p_{k}^f$ and velocity $v_k^f$ of the follower, $x_{k}^l := [p_{k}^l \ v_{k}^l]^{ T}$ contains the longitudinal position $p_k^l$ and velocity $v_k^l$ of the leader, and $u_{k}^f:=a_{k}^f$ and $u_{k}^l:=a_{k}^l$ are their longitudinal accelerations. In addition, $A := [\begin{smallmatrix} 1 & t_s\\ 0 & 1\end{smallmatrix}]$, $B := [\begin{smallmatrix} 0 \\ t_s\end{smallmatrix}]$, where $t_s$ is a sampling interval, and $\xi_{k}^f \in \mathbb{R}^2$ and $\xi_{k}^l \in \mathbb{R}^2$ represent modelling uncertainty and external disturbance. We denote the true disturbance sets for $\xi_{k}^l$ and $\xi_{k}^f$ as $\Xi^l_{\rm true}$ and $\Xi^f_{\rm true}$, respectively. Also, $u_{k}^l$ is constrained to lie in the set $\mathbb{U}^l_{\rm true}\subset \mathbb{R}$. Note that $\Xi^l_{\rm true}$, $\Xi^f_{\rm true}$,  and $\mathbb{U}^l_{\rm true}$  are unknown to the follower. Defining $x^{\rm des} := [-L \ 0]^{T}$ with a desired safety distance $L$, $x_{k} := x_{k}^f-x_{k}^l-x^{\rm des}$, and $u_k := a_k^f$,  we derive the relative dynamics as 
\begin{equation}\label{eq: relative model}
x_{k+1} = Ax_{k} + Bu_{k} + w_{k},
\end{equation}
where $w_{k} :=  \xi_{k}^f-Bu_{k}^l-\xi_{k}^l \in \mathbb{W}_{\rm true}:= \Xi^f_{\rm true} \oplus  (-\Xi^l_{\rm true}) \oplus (-B\mathbb{U}^l_{\rm true})$. In this platooning-control problem, the objective is to regulate the state in \eqref{eq: relative model} such that it converges to the origin while minimising control efforts. 

Challenges arise in the platooning-control problem, as the disturbance set $\mathbb{W}_{\rm true}$ is unknown to the controlled system. We compare our method with alternative approaches via a set of simulations. The optimisation problems involved are solved by \texttt{Ipopt}~\cite{wachter2006implementation} in \texttt{CasADi}~\cite{andersson2019casadi}, and set calculations by \texttt{Multi-Parametric Toolbox 3.0}~\cite{herceg2013multi} and \texttt{Yalmip} \cite{lofberg2004yalmip}. Also, all the parameters of these implementations can be found in our published code\footnote{\textcolor{blue}{\tiny{\url{https://github.com/JianZhou1212/learning-based-homothetic-tube-mpc}}}}. 

\emph{Learned Disturbance Sets and Corresponding Feasible Regions:} We first evaluate approximations to $\mathbb{W}_{\rm true}$ resulting from the methods in Section~\ref{sec: learning uncertainty set} and \cite{gao2023learning} and plot them in Fig.~\ref{Fig: disturbance set and feasible region} (a). The latter method provides an approximation as uniform scaling of the worst-case disturbance set with a translation. We then compare initial feasible sets of the learning-based homothetic and rigid tube MPC problems (where the aforementioned approximation sets are used to construct homothetic and rigid tubes, respectively) and their counterpart of the conventional homothetic tube MPC problem \eqref{Eq:HomTubeMPCOPT} (where the worst-case disturbance set is used to build homothetic tubes). These sets are shown in Fig.~\ref{Fig: disturbance set and feasible region} (b).

As shown in Fig.~\ref{Fig: disturbance set and feasible region} (a), the learned disturbance sets obtained by the method in Section~\ref{sec: learning uncertainty set} and the one in \cite{gao2023learning} are considerably less conservative than the worst-case disturbance set. This results in larger initial feasible regions for the learning-based homothetic and rigid tube MPC problems, as seen in Fig.~\ref{Fig: disturbance set and feasible region} (b). Also, since the set parameterisation of the latter method has fewer degrees of freedom, its corresponding learned disturbance set over-approximates the true set in some directions, in comparison with the method in Section~\ref{sec: learning uncertainty set}. This difference is also reflected in Fig.~\ref{Fig: disturbance set and feasible region} (b).

\emph{Online Performance:} To verify enhanced feasibility of the proposed method, we select two initial states shown in Fig.~\ref{Fig: disturbance set and feasible region} (b) for the relative model~\eqref{eq: relative model} and control the system with three algorithms. One initial state is near the boundary of the feasible region of the learning-based homothetic tube MPC while outside the feasible region of the other two approaches. Another initial state is within the feasible region of the learning-based rigid tube MPC, while outside that of the conventional homothetic tube MPC. 
State and input trajectories generated by these three algorithms are shown in Fig.~\ref{Fig:path} and Fig.~\ref{Fig:control}, respectively.
It is seen that the proposed approach satisfies control constraints while starting from both initial states. On the other hand, the learning-based rigid tube MPC violates control constraints while starting from initial state 2, and the conventional homothetic tube MPC is infeasible in both cases.

\emph{Online Computation:} The computation time of three control algorithms is shown in Table~\ref{tab: online computation performance}. It is seen that our proposed method is more efficient than the learning-based rigid tube MPC \cite{gao2023learning}. As in this work it is not necessary to recompute at each time step a finite number of constraints to be enforced, i.e. $\nu$ in \eqref{Eq:QuanHomTubeMPCOPT} is fixed, even though learned disturbance sets are possibly time varying. However, comparing with the conventional homothetic tube MPC, our proposed method has greater computational load, which is the cost of improved feasibility. During its implementation, the majority of computation time is spent in calculating $\hat{w}_{{\rm max},k}$ in~\eqref{eq:update w_max} at each time step. This is expected to be improved by fully utilising parallel computing capabilities.

\emph{Empirical Robustness:} We run $100$ simulations starting from the same initial condition for each value of $|\mathcal{I}_0|$ listed in Table~\ref{tab: empirical robustness}. The selected initial conditions of model~\eqref{eq: relative model} are at the boundaries of the feasible regions. Feasibility rates of these simulations are shown in Table~\ref{tab: empirical robustness}. Our proposed method achieves $100\%$ feasibility in all sets of simulations, and feasibility at all time steps of a simulation implies closed-loop constraint satisfaction.

\section{Conclusion} \label{sec: conclusion}
In this paper, we propose a learning-based homothetic tube MPC strategy for discrete-time linear systems
subject to bounded additive disturbance and mixed constraints on the state and input. Assuming that the true disturbance
set is unknown, we provide a data-driven algorithm to learn online the true disturbance set at each time step. The idea is to find an optimal scaling of a conservative set 
along different directions, which can be efficiently solved by an LP. 
We characterise statistical gaps between the true and learned disturbance sets, 
and on the basis of this we investigate probabilistic recursive feasibility of the learning-based MPC. Numerical simulations demonstrate
the efficacy and empirical robustness of our proposed algorithm and compare with the state-of-the-art MPC algorithms.

  \bibliographystyle{ieeetr}
  \bibliography{reference}
\end{document}